\DeclareMathOperator*{\argmin}{arg\,min}
\def\BibTeX{{\rm B\kern-.05em{\sc i\kern-.025em b}\kern-.08em T\kern-.1667em\lower.7ex\hbox{E}\kern-.125emX}}
\begin{document}

\title{Delaunay-Triangulation-Based Learning with Hessian Total-Variation Regularization}

\author{Mehrsa Pourya, Alexis Goujon, and Michael Unser,
\IEEEmembership{Fellow, IEEE}
\thanks{This work was supported in part by the European Research Council (ERC Project FunLearn) under Grant 101020573 and in part by the Swiss National Science Foundation, Grant 200020\_184646/1.}
\thanks{The authors are with the Biomedical Imaging Group, \'Ecole polytechnique f\'ed\'erale de Lausanne, 1015 Lausanne, Switzerland (e-mail: mehrsa.pourya@epfl.ch; alexis.goujon@epfl.ch; michael.unser@epfl.ch)}
}

\IEEEtitleabstractindextext{\begin{abstract}
Regression is one of the core problems tackled in supervised learning. Rectified linear unit (ReLU) neural networks generate continuous and piecewise-linear (CPWL) mappings and are the state-of-the-art approach for solving regression problems. In this paper, we propose an alternative method that leverages the expressivity of CPWL functions. In contrast to deep neural networks, our CPWL parameterization guarantees stability and is interpretable. Our approach relies on the partitioning of the domain of the CPWL function by a Delaunay triangulation. The function values at the vertices of the triangulation are our learnable parameters and identify the CPWL function uniquely. Formulating the learning scheme as a variational problem, we use the Hessian total variation (HTV) as regularizer to favor CPWL functions with few affine pieces. In this way, we control the complexity of our model through a single hyperparameter. By developing a computational framework to compute the HTV of any CPWL function parameterized by a triangulation, we discretize the learning problem as the generalized least absolute shrinkage and selection operator (LASSO). Our experiments validate the usage of our method in low-dimensional scenarios. 

\end{abstract}

\begin{IEEEkeywords}
Regression, Sparsity, CPWL, Simplicial splines, Generalized LASSO
\end{IEEEkeywords}

}

\maketitle

\section{Introduction}

\IEEEPARstart{S}{upervised} learning entails finding a function $f: \R^d \rightarrow \R $ using a set of $M$ training data points $\{\V x_m\}_{m=1}^{M} \subset \R^d$ and their target values $\{y_m\}_{m=1}^{M} \subset \R$. The function $f$ should approximate the target values at the data points $y_m \approx f(\V x_m)$ and generalize well to new inputs \cite{hastie2009overview}. In a variational framework, the learning problem is formalized as the optimization task
\begin{equation}
  \min\limits_{f \in \mathcal{X}} \sum_{m=1}^M \Op{E}(f(\V x_m), y_m) + \lambda \mathcal{R}(f),
\label{variational_learning}
\end{equation}
where $\mathcal{X}$ is the function search space, $\Op{E}: \R \times \R \rightarrow \R_{+}$ is a convex loss function controlling the data-fitting error, and $\mathcal{R}: \mathcal{X} \rightarrow \R$ is the regularizer. The hyperparameter $\lambda \geq 0$ adjusts the contribution of the regularizer. Regularization is used to promote functions with desirable structures such as sparsity and to reduce overfitting \cite{tian2022comprehensive, scholkopf2002learning}. 

In order to make (1) tractable, the search space $\mathcal{X}$ is usually expressed as a parametric space. For instance, linear regression---the simplest model---reduces the learning problem to the search for $\V a \in \mathbb{R}^d$ and $b \in \mathbb{R} $ such that $f(\V x) = \V a^\intercal \V x + b$ \cite{gross2003linear}.  This model is very well understood, but it is severely lacking expressivity. Another approach to the learning problem is founded on the theory of reproducing-kernel Hilbert spaces (RKHS) \cite{kadri2010nonlinear}, \cite{hofmann2005tutorial}. If the search space in (1) is the reproducing-kernel Hilbert space $\mathcal{X} = \mathcal{H}(\R^d)$ with kernel ${\rm k}: \R^d\times \R^d \to \R$, and $\mathcal{R}(f)=\norm{f}^2_{\mathcal{H}}$ where $\norm{\cdot}_{\mathcal{H}}$  is the Hilbert-space norm, then the RKHS representer theorem states that the solution of (1) admits the closed-form
\begin{equation}
    f(\cdot) = \sum_{m=1}^M a_m {\rm k}(\cdot, \V x_m)
\end{equation}
 for some coefficients $(a_m) \in \R^M$ \cite{scholkopf2001generalized}. There, the problem is recast into a solvable linear model. Moreover, by using radial-basis functions as kernels, one can approach continuous functions as closely as desired \cite{schaback2007practical}, \cite{miller1992review}, \cite{micchelli2006universal}. For high-dimensional data, however, kernel methods are outperformed by deep neural networks, which are the state-of-the-art in many applications \cite{abiodun2018state}. A neural network of depth $L$ forms a mapping
\begin{equation}
    \V x \mapsto (\V f_{\V \theta_L} \circ \V \sigma_{L-1} \circ \cdots \circ \V \sigma_l \circ \V f_{\V \theta_l} \cdots \circ \V \sigma_1 \circ \V f_{\V \theta_1}) (\V x),
\end{equation}
where $\V f_{\V \theta_l} : \mathbb{R}^{d_{l-1}} \rightarrow \mathbb{R}^{d_l}$ is a learnable affine mapping parameterized by $\V \theta_l$ and $\V \sigma_{l}$ is the nonlinearity (a.k.a. ``activation function") performed in the $l$th layer. Among all possible activation functions, the most common choice and usually the best in terms of performance is the rectified linear unit ReLU($x$) = $\max(x, 0)$ \cite{schmidt2020nonparametric}. It is known that a ReLU network produces continuous and piecewise-linear (CPWL) functions \cite{montufar2014number}. Conversely, any CPWL function can be described by a ReLU network \cite{arora2016understanding}. The CPWL model is universal, in the sense that it can approximate any continuous mapping \cite{DBLP:conf/iclr/ParkYLS21}. However, a major drawback of letting deep networks learn a CPWL function is the lack of their interpretability, in the sense that the effect of each parameter on the generated mapping is not understood explicitly \cite{fan2021interpretability}.  

There also exists a variational interpretation of shallow ReLU networks \cite{parhi2021kinds, unser2022ridges, ongie2019function}. However, shallow networks cannot represent all CPWL functions, especially in high dimensions. In fact, no variational interpretation is known for deep networks, to the best of our knowledge. Interestingly, there exist other variational problems that admit global CPWL minimizers \cite{debarre2019b}, \cite{debarre2021sparsest}. If we investigate (1) in the one-dimensional case $f: \R \to \R$, with the second-order total variation ${\rm TV^{(2)}}(f)$ as the regularizer, and if we restrict the search space to functions with a bounded second-order total variation, then the extreme points of the solution set are necessarily of the form
\begin{equation}
    f: x \mapsto ax + b + \sum_{k=1}^{K} d_k (x - \tau_k)_+,
\end{equation}
where $a, b \in \R, \V d = (d_k) \in \R^K$, $K < M$, and $(\tau_k) \in \R^K$. \cite{unser2021unifying} .  For such solutions, the regularization has the simple closed-form ${\rm TV^{(2)}}(f) = \norm{\V d}_{\ell_1}$.  As the $\ell_1$-norm promotes sparsity, a regularization by the second-order total variation will promote CPWL functions with few knots. A multidimensional generalization of ${\rm TV^{(2)}}$  is the Hessian total-variation ($\HTV$) seminorm \cite{aziznejad2021measuring}. $\HTV$ is equivalent to ${\rm TV^{(2)}}$ in the one-dimensional case and a similar regularizer was introduced before for image-reconstruction purposes \cite{lefkimmiatis2011hessian}. Since the null space of $\HTV$ is composed of affine mappings, this regularization favors solutions that are predominantly affine; typically, CPWL functions with few pieces. Authors in \cite{campos2021learning} use $\HTV$ regularization to learn two-dimensional functions. They parameterize the function search space by a span of linear box splines which are themselves CPWL functions \cite{condat2006three}. Although their parameterization has a good approximation power, it has some drawbacks. First, the number of box-spline basis functions grows exponentially with the dimension. Second, there are many domain regions without corresponding data points. These regions add complexity to the model without benefiting to its performance. 

In this paper, we investigate an alternative approach that is based on the Delaunay triangulation in an irregular setting. Here are our primary contributions.
\begin{enumerate}
  \item Flexible parameterization of CPWL functions. We partition the domain of CPWL functions by the Delaunay triangulation on a set of adaptive grid points; the function values at the grid points are our learnable parameters and identify the CPWL functions uniquely.
  \item Explicit computation of $\HTV$ in any dimension. For the proposed CPWL model, we show that $\HTV$ is the $\ell_1$-norm applied to a linear transformation of the grid-point values.
  \item Experimental validation of the scheme in dimensions at least two, with real data.
\end{enumerate}

Our learning approach uses the expressivity of CPWL functions in a stable and interpretable manner \cite{goujon2022stable}. These properties stem from the nature of the proposed Delaunay-based parameterization. In addition, the use of $\HTV$ regularization enables us to control the complexity of the final mapping by a single hyperparameter. 

Our paper is organized as follows: In Section II, we introduce the mathematical tools that we need to develop our method. We define Delaunay triangulation, CPWL functions, and the Hessian total variation. In Section III, we describe our CPWL parameterization, explain the procedure to calculate its $\HTV$, and derive the generalized least-absolute-shrinkage-and-selection-operator (LASSO) formulation of our learning problem. Finally, we present our experimental results in Section IV.

\section{Preliminaries}

\subsection{Delaunay Triangulation}
In the $d$-dimensional space $\R^d$, the convex hull of $d+1$ affinely independent points forms a polytope known as a $d$-simplex, or simplex for short. These simplices are indeed triangles and tetrahedrons in 2- and 3-dimensional spaces. A triangulation of a set $\M X \subset \R^d$ of points is the partition of their convex hull into simplices such that any two simplices intersect either at a face joint or not at all. Also, the triangulation vertices, referred to as grid points, are exactly the points themselves. We consider two simplices as neighbors if their intersection is a facet, which is a ($d-1$)-simplex. In general, a triangulation of a set of points is not unique. A celebrated triangulation method is the Delaunay triangulation. 
\begin{definition}[Delaunay Triangulation]
For a set $\M X$ of points in $\R^d$, a Delaunay triangulation is the triangulation $DT(\M X)$ such that no point in $\M X$ is inside the circum-hypersphere of any simplex in $DT(\M X)$. 
\end{definition}

Simply put, the Delaunay triangulation partitions the convex hull of points into well-formed simplices. Specifically for $d=2$, it is known that the Delaunay triangulation maximizes the minimal angle of the triangles of the triangulation and avoids skinny triangles. Similar optimal properties exist in higher dimensions \cite{rajan1994optimality}. In addition, there exist computational methods that produce Delaunay triangulations in any dimension \cite{lee1980two}, \cite{cignoni1998dewall}. 
\subsection{Continuous and Piecewise-Linear Functions}
\begin{definition}[CPWL function]
A function $f: \mathbb{R}^d \rightarrow \mathbb{R}$ is continuous and piecewise-linear if
\begin{itemize}
    \addtolength{\itemindent}{0.4cm}
    \item[1.] it is continuous;
    \item[2.] its domain $\Omega = \bigcup_{n} P_n$ can be partitioned into a set of non-overlapping polytopes $P_n$ over which it is affine, with $\left.f\right|_{P_n}(\V x) = \V a_n^T \V x + b_n$.
\end{itemize}
\end{definition}
 
 The gradient of the function over each polytope or, equivalently, each linear region $P_n$, is $\V \nabla f|_{P_n}(\V x) = \V a_n$. We denote the intersection facet of two neighboring polytopes $P_n$ and $P_k$ as $L_{n,k}$. The ($d-1$)-dimensional volume of the intersection is denoted by ${\rm Vol}_{d-1}(L_{n,k})$. For $d=2$ and $d=3$, this volume corresponds to length and area, respectively. Finally, we define $\V u_{n,k} \in \R^d$ as the unit vector that is normal to $L_{n,k}$. We follow these notation throughout the paper. 
 
Alternatively, any CPWL function can be defined by a triangulation and the values of the function at its vertices, which we refer to as the simplicial parameterization of CPWL functions. The Authors in \cite{liu2020delaunay} use a similar parameterization. This parameterization yields a Riesz basis, which guarantees a unique and stable link between the model parameters and the CPWL function \cite{goujon2022stable}. In Figure 1, we show an example of the domain of an arbitrary CPWL function and a possible triangulation of it.

\begin{figure}[h]
\begin{center}
  \includegraphics[width=8cm,height=5cm,keepaspectratio]{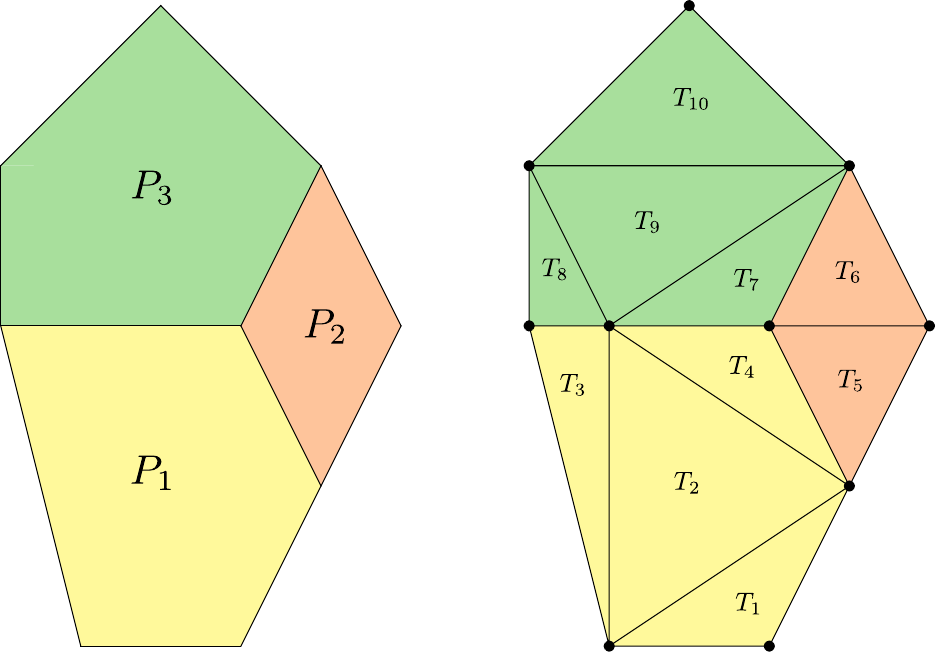}
  \caption{Domain of a two-dimensional CPWL function. Each patch corresponds to one linear region (left) and its partition into simplices through a Delaunay triangulation (right).}
  \label{fig:delcpwl}
\end{center}
\end{figure}
\subsection{Hessian Total Variation}
\subsubsection{Generalized Hessian Matrix}
The Hessian matrix of a twice-differentiable function $f: \R^d \rightarrow \R$ is defined as
\begin{equation}
    \Op H\{f\} = \begin{bmatrix}
            \frac{\partial^2 f}{\partial x_1^2} & \cdots & \frac{\partial^2 f}{\partial x_1 \partial x_d}  \\
            \vdots & \ddots & \vdots \\
            \frac{\partial^2 f}{\partial x_d \partial x_1} & \cdots &  \frac{\partial^2 f}{\partial x_d^2}
    \end{bmatrix}.
\end{equation}
One can extend the Hessian definition to generalized functions (distributions) by using the notion of weak partial derivative. This enables us to define the Hessian of CPWL functions even though they are not twice-differentiable everywhere. The directional derivative at a point $\V x$ and along the unit vector $\V u$ is defined as 
\begin{equation}
    \Dop_{\V u} f(\V x) = \lim_{h\to0} \frac{f(\V x + h \V u) - f(\V x)}{h}.
\end{equation}
Likewise, the second-order directional derivative along the direction $\V u$ is $\Dop_{\V u}^2 f = \Dop_{\V u} \circ \Dop_{\V u} (f)$. Using the Hessian matrix, we write that $\Dop_{\V u}^2 f(\V x) = \V u^T \Op{H}\{f\}(\V x) \V u$. A symmetric Hessian matrix has a complete set of eigenvalues and eigenvectors which form an orthogonal basis for $\R^d$. Consequently, the second-order directional derivative along the eigenvector $\V v_q$ of the Hessian is its associated eigenvalue $\lambda_q$, with $\Dop_{\V v_q}^2 f(\V x) = \lambda_q$ for $q \in {1, \ldots, d}$. If we use the eigenvectors of the Hessian to represent the direction $\V u = \sum_{q=1}^{d} t_q \V v_q$, then we have that $\Dop_{\V u}^2 f(\V x) = \sum_{q=1}^{d} t_q^2 \lambda_q$. This means that, at each point $\V x \in \R^d$ of the domain, the second-order directional derivatives along the eigenvectors of the Hessian fully characterize the second-order derivatives of $f$ along any direction.

\subsubsection{Schatten p-Norm}
The Schatten p-norm $\norm{\cdot}_{{\mathcal{S}}_p}$ of a matrix $\M A \in \R^{d \times d}$ for $p \in [1, +\infty]$ is defined as the $\ell_p$-norm of its singular values given by
\begin{equation}
    \norm{\M A}_{{\mathcal{S}}_p} \coloneqq \begin{cases}
    {\left(\sum\limits_{k=1}^{d} {\sigma_k}^p\right)}^{\frac{1}{p}}, & 1 \leq p < +\infty\\
    \max\limits_{k} {\sigma_k}, & p=+\infty,
    \end{cases}
\end{equation}
where $(\sigma_1, \ldots, \sigma_d)$ are the singular values of $\M A$. In this paper, we focus on the $\mathcal{S}_1$-norm and its dual $\mathcal{S}_{\infty}$.  
\subsubsection{Hessian Total Variation}
If $f$ is twice differentiable with a symmetric Hessian at $\V x \in \R^d$, then $\norm{\Op{H}\{f\}(\V x)}_{\mathcal{S}_1}$ is given by the $\ell_1$-norm of the second-order directional derivatives along the eigenvectors of the Hessian matrix. This measure provides a local characterization of the second-order variation of $f$. Hence, the total variation of the mapping $\cdot \mapsto \norm{\Op{H}\{f\}(\cdot)}_{\mathcal{S}_1}$ is a reasonable generalization of the second-order total variation for multidimensional functionals. This is referred to as the Hessian total variation ($\HTV$). If $\Op{H}\{f\} \in L_1(\R^d; \R^{d\times d})$, then $\HTV$ is simply defined as

\begin{align}
    {\HTV}(f) = \norm{\norm{\Op{H}\{f\}}_{\mathcal{S}_1}}_{L_1} =
    \notag \\ \int_{\R^d}
   \sum\limits_{k=1}^{d} \sigma_k(\Op{H}\{f\}(\V x)) {\mathrm d} \V x.
\end{align}

In particular, by using the $\HTV$ as a regularizer, we promote configurations where most of the singular values of the Hessian are zero throughout the function domain due to the sparsity effects of the $\ell_1$- and $L_1$-norms. In that manner, CPWL functions are of special interest as their second-order directional derivatives vanish almost everywhere. Specifically, the gradient of a CPWL function $f$ admits the explicit form
\[
  {\bm \nabla} f(\V x) = \sum_{n} \V a_n \Indic_{P_n}(\V x),
\]{gradient_CPWL}
where the indicator function $\Indic_{P_n}$ is equal to one inside the simplex $P_n$ and zero elsewhere. Hence, the second-order directional derivatives vanish everywhere except on the boundaries of the domain polytopes. There, the generalized Hessian matrix of the CPWL functions exhibits delta-like distributions. This means that (8) does not hold for CPWL functions as $\Op{H}\{f\} \notin L_1(\R^d; \R^{d\times d})$. Yet, we can extend the definition of the Hessian total variation to accommodate CPWL functions by invoking duality. Specifically, we define
\begin{equation}
    {\rm HTV}(f) = \norm{\Op H \{f\}}_{\mathcal{S}_1, \mathcal{M}},
\end{equation}
where the mixed norm $ \norm{\cdot}_{\mathcal{S}_1, \mathcal{M}}$ for any matrix-valued distribution $W \in \mathcal{S}^{\prime}(\R^d; \R^{d\times d})$ is
\begin{align}
     \norm{\Op W}_{\mathcal{S}_1, \mathcal{M}} \coloneqq \sup\{\langle W, F \rangle : F \in \mathcal{S}(\R^d; \R^{d\times d}), \notag \\ 
     \norm{F}_{\mathcal{S}_\infty} \leq 1\}.
\end{align}
In (11), $\mathcal{S}(\R^d; \R^{d\times d})$ is the matrix-valued Schwartz space and the duality product is defined as
\begin{equation}
    \langle W, F \rangle \coloneqq \sum_{p=1}^{d} \sum_{q=1}^{d} \langle w_{p, q}, f_{p, q} \rangle.
\end{equation}
This formulation enables us to derive a closed form for the $\HTV$ of a CPWL function $f$ as
\begin{align}
    {\rm HTV}(f) &=
     \sum_{(n, k) \in \mathcal{N}} \abs{\V u_{n,k}^T (\V a_k - \V a_n)} {\rm Vol}_{d-1}(L_{n,k})  \\
    & =
     \sum_{(n, k) \in \mathcal{N}}  \norm{\V a_k - \V a_n}_2  {\rm Vol}_{d-1}(L_{n,k}),
  \end{align}
where the set $\mathcal{N}$ contains all unique pairs of neighboring polytope indices (see \cite{aziznejad2021measuring} for more details). Equation (14) tells us that, for CPWL functions, $\HTV$ penalizes the change of slope at neighboring domain polytopes by the volume of their intersection. For a purely affine mapping, the $\HTV$ is zero. Otherwise, it increases with the number of affine pieces.

$\HTV$ admits the following properties:
\begin{align}
  & \text{1)} \quad \HTV\left(f(\M U \, \cdot))\right) 
  = 
  \HTV(f),  \quad \M U \in \R^{d\times d} : \M U^T \M U = \M I; \notag\\
  & \text{2)} \quad \HTV\left(f(\alpha \, \cdot)\right) 
  = 
   \,  \abs{\alpha} ^{2-d}\HTV(f), \quad \quad \forall \alpha \in \R; \notag\\
  & \text{3)} \quad \HTV\left(f(\cdot - \V x_0)\right) 
  = 
  \HTV(f), \quad \quad \forall \V x_0 \in \R^d.
\label{eq:rotation_scale_translation_invariance}
\end{align}
In other words, $\HTV$ is invariant to translation and rotation while it is covariant with scaling. 
\section{Methods}
Let us perform a Delaunay triangulation on the set $\M T_g = \{\V \tau_k\}_{k=1}^{N_g} \subset \R^d$ of grid points. We denote the resulting simplices by $\mathcal{T}=\{T_n\}_{n=1}^{N_s}$. We define the CPWL functions $f_{\mathrm{Hull}}$ inside the convex hull $\mathrm{Hull}(\M T_g)$ by designating the members of $\mathcal{T}$ as their linear regions. Then, the functions are parameterized as 

\begin{equation}
    f_{\mathrm{Hull}}(\V x) = \sum_{k = 1}^{N_g} c_k s_k(\V x)
\end{equation}
with expansion coefficients $c_k \in \R$ and basis functions $s_k$. The basis $s_k$ is the hat function attached to the $k$th vertex. It is is given by
\begin{equation}
    s_k(\V x) = \begin{cases}
        \beta_{k, l}(\V x), & \V x \in T_{k, l} \\     
        0, & \text{otherwise}.
    \end{cases}
\end{equation}
$T_{k,l}$ is the $l$-th simplex that contains the vertex $k$ and $\beta_{k, l}(\V x)$ is the barycentric coordinate of the point $\V x$ inside the simplex $T_{k,l}$ with respect to vertex $k$. The basis functions $s_k$ are continuous and form a Riesz basis for the space. From the definition of barycentric coordinates, we know that $s_k(\V \tau_k) = 1$ and that $s_k(\V \tau_n) = 0, n \neq k$. We illustrate an example of this basis function in Figure 2 for $d=2$. Consequently, the $c_k$ in (16) are given by $c_k = f_{\mathrm{Hull}}(\V \tau_k)$. The item of relevance is that the function $f_{\mathrm{Hull}}$ is uniquely identified by $\{(\V \tau_k, c_k)\}_{k=1}^{N_g}$. An example of such a function is illustrated in Figure 3 (a). In our work, the grid points are immutable and $\V c = (c_k)$ is our vector of learnable parameters. These parameters are the sampled values of the function at the vertices of the triangulation. Hence, their effect on the generated mapping is known, which makes our model directly interpretable. We can also write the function $f_{\mathrm{Hull}}$ as
\begin{equation}
    f_{\mathrm{Hull}}(\V x) = \V \gamma_{\V x}^T \V c,
\end{equation}
where $\V \gamma_{\V x} = (\V s_k(\V x))_{k=1}^{N_g}$. The importance of this representation is that it expresses the function value as a linear combination of the grid-point values $\V c$. In each simplex, the only nonzero basis functions are the ones defined over the vertices of that simplex. This implies that there are at most ($d+1$) nonzero values in the vector $\V \gamma_{\V x}$. 
\begin{figure}[h]
\begin{center}
  \includegraphics[width=8cm,height=5cm,keepaspectratio]{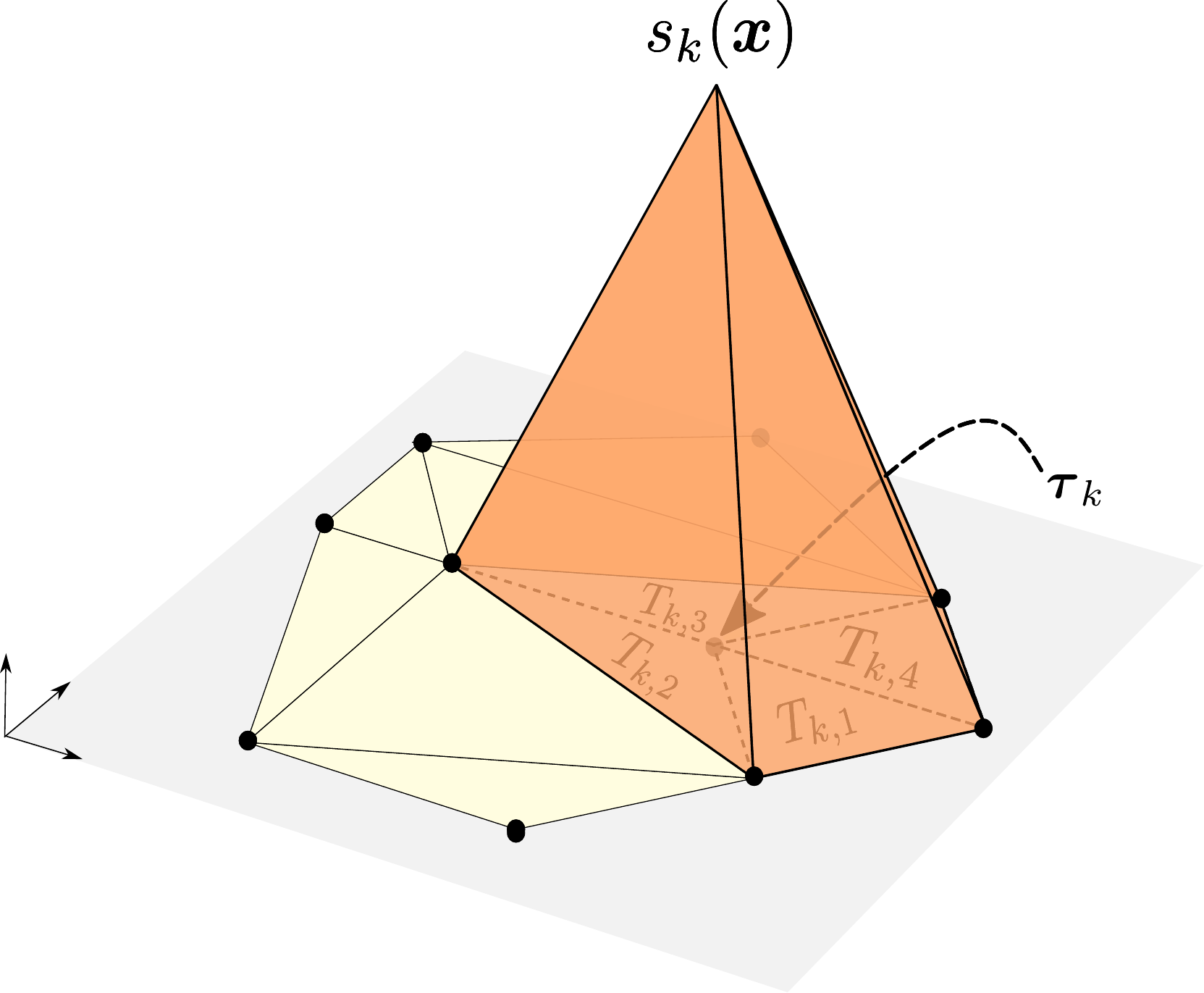}
  \caption{The $k$th vertex of the triangulation with coordinate $\V \tau_k$ is included in 4 domain simplices $\{T_{k,l}\}_{l=1}^4$ over which the associated simplicial basis (hat) function $s_k(\V x)$ is illustrated.}
  \label{fig:delcpwl}
\end{center}
\end{figure}

\begin{figure}[h]
\begin{center}
  \subfloat[]{\includegraphics[width=8cm,height=5cm,keepaspectratio]{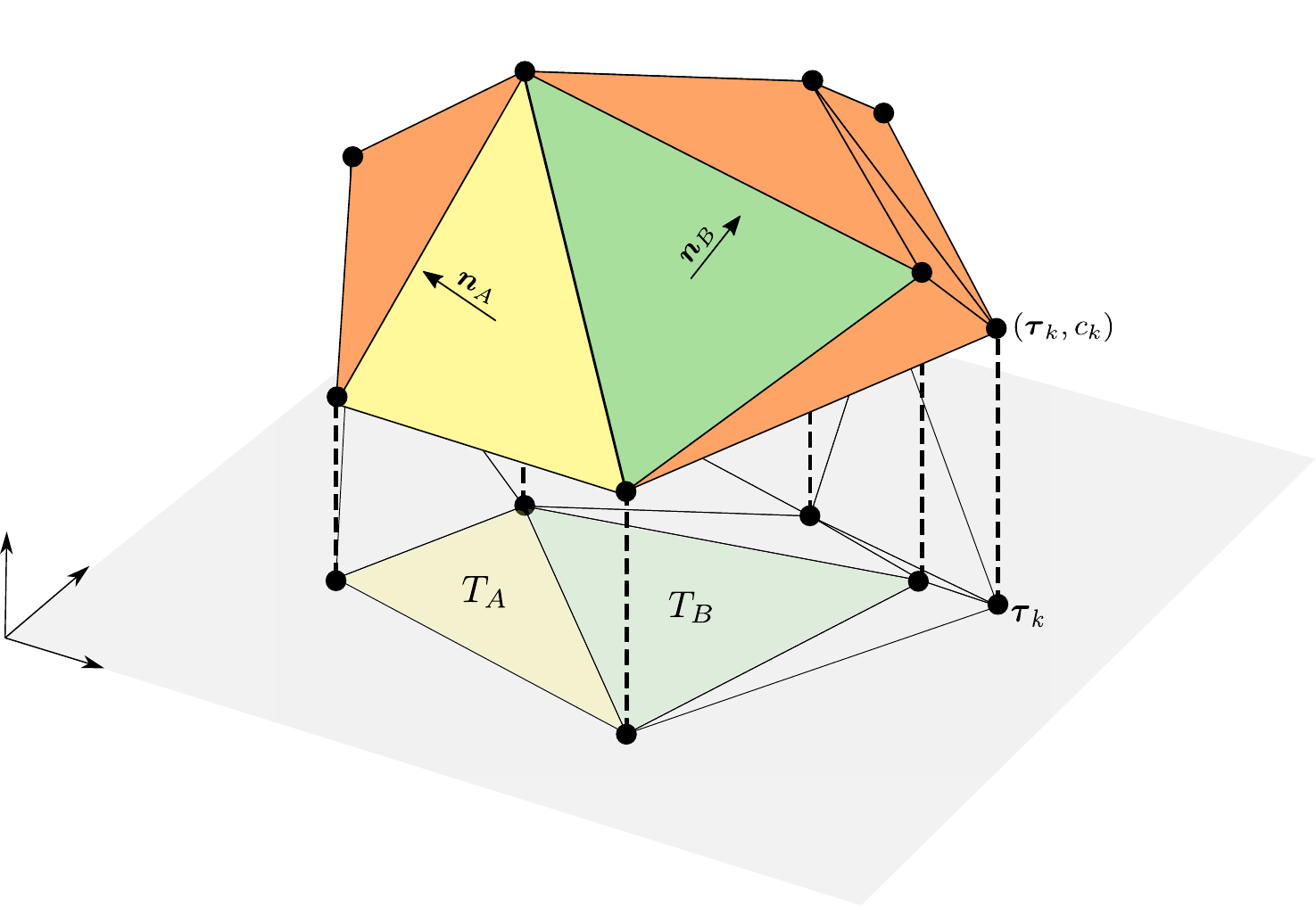}}

  \subfloat[]{\includegraphics[width=8cm,height=5cm,keepaspectratio]{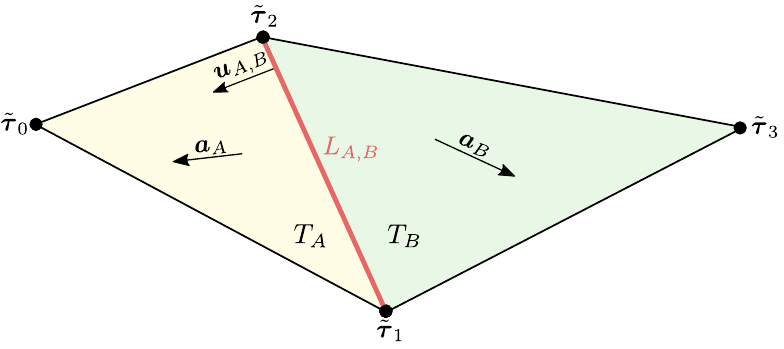}}
  \caption{(a) Example of a CPWL function $f_{\mathrm{Hull}}$ defined through a triangulation. Over each simplex $T_A$, one can identify the function $f_{\mathrm{Hull}}$ by the plane that passes through the points $\{(\V \tau_k, f_{\mathrm{Hull}}(\V \tau_k) = c_k)\}_{\V \tau_k \in T_A}$. The vector $\V n_A$ corresponds to the normal vector of that plane.
  (b) Top view of the domain simplices $T_A$ and $T_B$.  
  }
  \label{fig:delcpwl}
\end{center}
\end{figure}
\subsection{Forward Operator}

Given the set of training $\{(\V x_m, y_m)\}_{m=1}^{M}$ data, we choose grid points $\M T_g$ such that all data points are inside $\mathrm{Hull}(\M T_g)$. Then, the function $f_{\mathrm{Hull}}$ is evaluated at data points as
\begin{equation}
    {\V f} = \M H \V c,
\end{equation}
where ${\V f} = (f_m)_{m=1}^{M}$ and $f_m= f_{\mathrm{Hull}}(\V x_m)$. The matrix $\M H \in \R^{M \times N_g}$ is referred to as  the forward operator and is a mapping between the grid-point values $\V c$ and the values of the function $\V f$ at the data points. If we represent the forward matrix as 
\begin{equation}
    \M H = \begin{bmatrix}
            \V h_1^T \\
            \vdots \\
            \V h_{N_g}^T
    \end{bmatrix},
\end{equation}
then, by (18), each row of $\M H$ can be written as $\V h_m =  \V \gamma_{\V x_m} $. This implies that $\M H$ is a sparse matrix with at most ($d+1$) nonzero entries per row. 

\subsection{Regularization Operator}
To determine the Hessian total variation of the function $f_{\mathrm{Hull}}$ through (13), we need to calculate three quantities for each pair of neighbor simplices in $\mathcal{T}$; the gradient difference of the affine pieces over those pairs; the unit normal of their intersection; and, finally, their intersection volume.

\noindent \textbf{Setup:}  Assume that $T_A, T_B \in \mathcal{T}$ are two neighboring simplices and that the sets of indices of their vertices are $\mathcal{V}_A$ and $\mathcal{V}_B$, with $\abs{\mathcal{V}_A} = \abs{\mathcal{V}_B} = d+1$. We denote the vertices of their intersection by $\mathcal{V}_{A,B} = \mathcal{V}_A \cap \mathcal{V}_B$. There are exactly $d$ common vertices between two neighboring simplices, so that $\abs{\mathcal{V}_{A,B}} = d$. We assume that $\tilde{\V \tau}_0,  \tilde{\V \tau}_{d+1}$ and $\tilde{c}_0, \tilde{c}_{d+1}$ are the coordinates and values at the vertices indexed by members of $\mathcal{V}_A \setminus \mathcal{V}_{A,B}$ and $\mathcal{V}_B \setminus \mathcal{V}_{A,B}$. Also, we denote $\tilde{\V \tau}_l$ and $\tilde{c}_l$ the coordinate and value of the vertex indexed by the $l$-th smallest member of $\mathcal{V}_{A,B}$ for $1 \leq l \leq d$. We show an example of this setup in Figure 3 (b).

\begin{theorem}[Gradient difference]
Let $\V \nabla f_{\mathrm{Hull}}|_{T_A}(\V x) = {\V a}_A$ and $\V \nabla  f_{\mathrm{Hull}}|_{T_B}(\V x) = {\V a}_B$. Their difference can be expressed as the linear combination of grid points values $\V c$ given by

\begin{equation}
    {\V a}_A - {\V a}_B = {\M G}_{A,B} {\V c},
\end{equation}
where
\begin{equation*}
    {\M G}_{A,B} = \begin{bmatrix} \M G_A \V 1 & \M G_B - \M G_A & -\M G_B\V 1 \end{bmatrix} \M W_{A, B}, \\[11pt]
\end{equation*}
\begin{equation}
    \M G_A = \begin{bmatrix}
                (\tilde{\V \tau}_0 - \tilde{\V \tau}_1)^T \\
                \vdots \\
                (\tilde{\V \tau}_0 - \tilde{\V \tau}_{d})^T
             \end{bmatrix}^{-1}, \M G_B= \begin{bmatrix}
                (\tilde{\V \tau}_{d+1} - \tilde{\V \tau}_1)^T \\
                \vdots \\
                (\tilde{\V \tau}_{d+1} - \tilde{\V \tau}_d)^T
             \end{bmatrix}^{-1}.
\end{equation}
There, the symbol $\V 1$ represents the vector $(1)_{k=1}^{d}$ and $\M W_{A, B} = [w_{p, k}]_{(d+2)\times N_g}$ is a sparse binary matrix such that $w_{p,k}= 1$ if and only if $\tilde{\V \tau}_p = {\V \tau}_k$.
\end{theorem}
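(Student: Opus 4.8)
The plan is to express each of the two gradients $\V a_A$ and $\V a_B$ separately as a linear map of the relevant vertex values, and then subtract and regroup the resulting terms according to whether a vertex is shared by the two simplices or not. The starting observation is that $f_{\mathrm{Hull}}$ is affine on each simplex, so on $T_A$ we have $f_{\mathrm{Hull}}(\V x) = \V a_A^T \V x + b_A$ and, for any two vertices of $T_A$, the difference of the attained values equals $\V a_A^T$ applied to the difference of the coordinates, with the offset $b_A$ cancelling. Taking the unshared vertex $\tilde{\V \tau}_0$ as the reference and subtracting the values at the $d$ shared vertices yields the $d$ scalar equations $(\tilde{\V \tau}_0 - \tilde{\V \tau}_l)^T \V a_A = \tilde c_0 - \tilde c_l$ for $l = 1, \dots, d$.

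First I would stack these $d$ equations into the matrix form $\M G_A^{-1}\V a_A = \tilde c_0 \V 1 - (\tilde c_l)_{l=1}^d$, where $\M G_A^{-1}$ is exactly the matrix of rows $(\tilde{\V \tau}_0 - \tilde{\V \tau}_l)^T$ appearing in the statement. This matrix is invertible precisely because the $d+1$ vertices of a $d$-simplex are affinely independent, so the $d$ edge vectors $\tilde{\V \tau}_0 - \tilde{\V \tau}_l$ are linearly independent; this is what makes $\M G_A$ well defined and gives $\V a_A = \tilde c_0 \M G_A \V 1 - \M G_A (\tilde c_l)_{l=1}^d$. Running the identical argument on $T_B$ with reference vertex $\tilde{\V \tau}_{d+1}$ and the same $d$ shared vertices produces $\V a_B = \tilde c_{d+1}\M G_B \V 1 - \M G_B (\tilde c_l)_{l=1}^d$.

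The decisive step is the subtraction $\V a_A - \V a_B$ together with the regrouping of its terms. The contributions of the two unshared vertices remain isolated as $\tilde c_0 \M G_A\V 1$ and $-\tilde c_{d+1}\M G_B\V 1$, while the two copies of the shared-vertex term combine into $(\M G_B - \M G_A)(\tilde c_l)_{l=1}^d$; collecting the scalars into the single $(d+2)$-vector $(\tilde c_0, \tilde c_1, \dots, \tilde c_d, \tilde c_{d+1})^T$ then reproduces exactly the block matrix $[\,\M G_A\V 1 \;\; \M G_B - \M G_A \;\; -\M G_B\V 1\,]$ of the theorem. Finally I would relate this local value vector to the global parameter vector $\V c \in \R^{N_g}$ through the binary selection matrix $\M W_{A,B}$, whose rows pick out the entries of $\V c$ indexed by the vertices $\tilde{\V \tau}_p$, so that $(\tilde c_0,\dots,\tilde c_{d+1})^T = \M W_{A,B}\V c$ and hence $\V a_A - \V a_B = \M G_{A,B}\V c$. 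The computation is routine linear algebra; the only points requiring genuine care are the bookkeeping in the regrouping step and the justification that the simplex (affine-independence) condition makes $\M G_A$ and $\M G_B$ well defined.
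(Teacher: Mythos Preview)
Your proposal is correct and follows essentially the same approach as the paper's own proof: derive the affine interpolation equations on each simplex, invert to obtain $\V a_A$ and $\V a_B$ as linear combinations of the local vertex values, subtract and regroup into the block form, and then pull back to the global vector $\V c$ via $\M W_{A,B}$. The only addition is your explicit remark that affine independence of the simplex vertices guarantees the invertibility defining $\M G_A$ and $\M G_B$, which the paper leaves implicit.
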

\begin{proof}
Since $f_{\mathrm{Hull}}$ is affine over the simplices $T_A$ and $T_B$, we have that
\begin{equation}
    \begin{cases}
        (\tilde{\V \tau}_0 - \tilde{\V \tau}_p)^T {\V a}_A  = \tilde{c}_0 - \tilde{c}_p, & p= 1,...,d \\
        (\tilde{\V \tau}_{d+1} - \tilde{\V \tau}_p)^T {\V a}_B = \tilde{c}_{d+1} - \tilde{c}_p, & p = 1,...,d.
    \end{cases}
\end{equation}
Putting all equations together, we obtain that
\begin{align}
     {\V a}_A = \begin{bmatrix}
                (\tilde{\V \tau}_0 - \tilde{\V \tau}_1)^T \\
                \vdots \\
                (\tilde{\V \tau}_0 - \tilde{\V \tau}_{d})^T
             \end{bmatrix}^{-1} \begin{bmatrix}
            \tilde{c}_0 - \tilde{c}_1 \\
            \vdots \\
            \tilde{c}_0 - \tilde{c}_d 
         \end{bmatrix} = \M G_A \begin{bmatrix}
            \tilde{c}_0 - \tilde{c}_1 \\
            \vdots \\
            \tilde{c}_0 - \tilde{c}_d 
         \end{bmatrix} \notag \\[10pt]
        = \tilde{c}_0 \M G_A \V 1 - \M G_A \begin{bmatrix}
            \tilde{c}_1 \\
            \vdots \\
            \tilde{c}_d 
         \end{bmatrix}.
\end{align}
By analogy, we have that
\begin{equation}
         {\V a}_B = \tilde{c}_{d+1} \M G_B \V 1 - \M G_B \begin{bmatrix}
            \tilde{c}_1 \\
            \vdots \\
            \tilde{c}_d 
         \end{bmatrix}.
\end{equation}
Next, we write the difference of $\V a_A$ and $\V a_B$ as
\begin{equation}
     \V a_A = \begin{bmatrix} \M G_A \V 1 & -\M G_A & \V 0 \end{bmatrix} \begin{bmatrix}
            \tilde{c}_0 \\
            \vdots \\
            \tilde{c}_{d+1}
         \end{bmatrix} ,
\end{equation}

\begin{equation}
     \V a_B = \begin{bmatrix} \V 0  & -\M G_B & \M G_B \V 1 \end{bmatrix} \begin{bmatrix}
            \tilde{c}_0 \\
            \vdots \\
            \tilde{c}_{d+1}
         \end{bmatrix} ,
\end{equation}
where $\V 0 = (0)_{k=1}^{d}$.
 \begin{figure*}[h]
  \begin{center}
  \includegraphics[keepaspectratio, width=\textwidth]{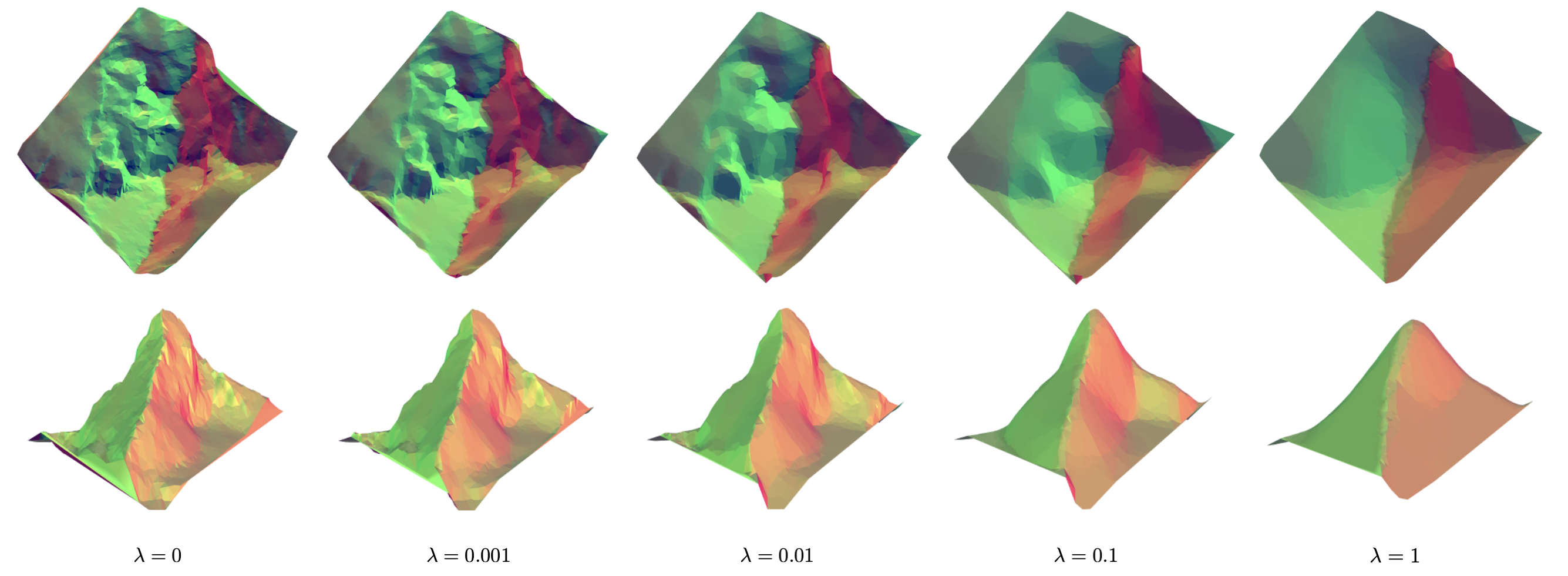}
  \caption{Effect of the regularization hyperparameter $\lambda$ on the learned function for the evaluation map of the Matterhorn with our framework. Each column includes the results for the specified $\lambda$. The images in the first and second row give the top and side views of the learned mappings.}
  \label{fig:Cervin}
  \end{center}
\end{figure*}
Hence, 

\begin{equation}
    {\V a}_A - {\V a}_B = \begin{bmatrix} \M G_A \V 1 & \M G_B - \M G_A & -\M G_B\V 1 \end{bmatrix} \begin{bmatrix}
            \tilde{c}_0 \\
            \vdots \\
            \tilde{c}_{d+1}
         \end{bmatrix}
\end{equation}
and, from the definition of $\M W_{A, B}$, we know that
\begin{equation}
    \begin{bmatrix}
            \tilde{c}_0 \\
            \vdots \\
            \tilde{c}_{d+1}
         \end{bmatrix} = \M W_{A, B} \V c.
\end{equation}
From (28) and (29), we finally obtain (21).
 \end{proof}
\begin{theorem}[Determination of a normal vector]
The unit normal ${\V u}_{A,B}$ of the intersection of $T_A$ and $T_B$ is given by
  \begin{equation}
    {\V u}_{A,B} = \frac{\M N_{[1:d,1]}}{\norm{\M N_{[1:d,1]}}},
 \end{equation}
 where
\begin{equation}
\M N = 
        \begin{bmatrix}
            \tilde{\V \tau}_0^T & 1 \\
            \tilde{\V \tau}_1^T & 1 \\
            \vdots & \vdots \\
            \tilde{\V \tau}_{d}^T & 1
        \end{bmatrix}^{-1}
    \end{equation}
and where the slicing operator $[1:d, 1]$ returns the first $d$ elements of the first column of $\M N$.
\end{theorem}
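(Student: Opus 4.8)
The plan is to exploit the defining relation $\M M \M N = \M I$, where $\M M$ denotes the $(d+1)\times(d+1)$ matrix being inverted in the statement, i.e.\ the matrix whose $i$-th row is $\begin{bmatrix}\tilde{\V\tau}_i^T & 1\end{bmatrix}$ for $i=0,1,\ldots,d$, so that $\M N = \M M^{-1}$. First I would note that $\M M$ is invertible precisely because $\tilde{\V\tau}_0,\tilde{\V\tau}_1,\ldots,\tilde{\V\tau}_d$ are the $d+1$ vertices of the $d$-simplex $T_A$ and are therefore affinely independent; any affine dependence among them would yield a nonzero left-null vector of $\M M$. Hence $\M N$ is well defined and the construction makes sense.

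Next I would read off the identity $\M M \M N = \M I$ one column at a time, focusing on the first column. Writing that column as $\V n_1 = \begin{bmatrix}\V w^T & w_{d+1}\end{bmatrix}^T$ with $\V w = \M N_{[1:d,1]}\in\R^d$, the relation $\M M \V n_1 = \V e_1$ expands into the scalar conditions
\[
  \tilde{\V\tau}_0^T \V w + w_{d+1} = 1, \qquad \tilde{\V\tau}_l^T \V w + w_{d+1} = 0 \quad (l=1,\ldots,d).
\]
Subtracting one of the last $d$ equations from another gives $(\tilde{\V\tau}_l - \tilde{\V\tau}_m)^T \V w = 0$ for all $l,m\in\{1,\ldots,d\}$. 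Since $\tilde{\V\tau}_1,\ldots,\tilde{\V\tau}_d$ are the $d$ vertices of the common facet $L_{A,B}$, the differences $\{\tilde{\V\tau}_l - \tilde{\V\tau}_1\}_{l=2}^{d}$ span the $(d-1)$-dimensional linear subspace parallel to $L_{A,B}$; thus $\V w$ is orthogonal to that subspace and is therefore a normal vector of the facet.

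It then remains to confirm that $\V w$ is nonzero, so that the normalization is legitimate, and that it lies along the unique normal direction. Subtracting the row-$0$ equation from any facet equation yields $(\tilde{\V\tau}_0 - \tilde{\V\tau}_1)^T \V w = 1 \neq 0$, which forces $\V w \neq \V 0$. Because the facet has dimension $d-1$, its orthogonal complement in $\R^d$ is one-dimensional, so the normal direction is unique up to sign; as $\V w$ is a nonzero element of that line, $\V w/\norm{\V w} = \M N_{[1:d,1]}/\norm{\M N_{[1:d,1]}}$ is one of the two unit normals, which is exactly $\V u_{A,B}$ (the choice of sign being immaterial, since the HTV formula (13) involves only an absolute value).

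The one step that genuinely needs justifying rather than merely computing is the passage from \emph{``orthogonal to every in-facet edge vector''} to \emph{``normal to the facet''}: this relies on $L_{A,B}$ being a bona fide $(d-1)$-simplex, so that its $d$ vertices are affinely independent and their edge vectors span a full $(d-1)$-dimensional subspace, leaving a one-dimensional normal line. This is guaranteed by the triangulation hypothesis. Everything else is a direct reading of the block structure of $\M M \M N = \M I$.
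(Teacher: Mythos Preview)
Your argument is correct and follows essentially the same route as the paper: both read the first column of $\M M\M N=\M I$ to obtain $\tilde{\V\tau}_l^T\V w+\text{const}=0$ for $l=1,\ldots,d$, subtract pairwise to conclude that $\V w=\M N_{[1:d,1]}$ is orthogonal to every edge vector of the common facet, and hence normal to $L_{A,B}$. You add two points the paper leaves implicit---the invertibility of $\M M$ via affine independence of the simplex vertices, and the nonvanishing of $\V w$ from $(\tilde{\V\tau}_0-\tilde{\V\tau}_1)^T\V w=1$---which only strengthen the presentation.
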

\begin{proof}
The intersection of $T_A$ and $T_B$ is the facet opposite to the vertex with coordinate $\tilde{\V \tau}_0$ in the simplex $T_A$. We show that $\V z_0 = \M N[1:d, 1]$  is perpendicular to that facet and, hence, to the intersection. From the definition of the inverse, we know that

 \begin{equation}
    \begin{bmatrix}
            \tilde{\V \tau}_0^T & 1 \\
           \tilde{\V \tau}_1^T & 1 \\
            \vdots & \vdots \\
            \tilde{\V \tau}_d^T & 1
        \end{bmatrix} \M N = \M I. 
\end{equation}
Hence, if we write the matrix $\M N$ as 

\begin{equation}
    \M N = \begin{bmatrix}
                \V z_0 & \cdots & \V z_d \\
                -b_0 &  \cdots & -b_d 
        \end{bmatrix} 
\end{equation}
for some $\{\V z_k\}_{k=0}^{d}  \subset \R^d$ and $\{b_k\}_{k=0}^d \subset \R$, then we have that
 \begin{equation}
    \begin{bmatrix}
            \tilde{\V \tau}_0^T & 1 \\
            \vdots & \vdots \\
            \tilde{\V \tau}_d^T & 1
        \end{bmatrix} 
        \begin{bmatrix}
                \V z_0 & \cdots & \V z_d \\
                -b_0 &  \cdots & -b_d 
        \end{bmatrix} = \begin{bmatrix}
            1 &  & 0 \\
 
            &  \ddots &  \\
            0  & & 1        \end{bmatrix}.
\end{equation}
In particular, this implies that $\tilde{\V \tau}_0^T \V z_0 = 1 + b_0$ and $\tilde{\V \tau}_m^T \V z_0 = b_0$ for $m > 0$, so that
\begin{equation}
     (\tilde{\V \tau}_m - \tilde{\V \tau}_l)^T \V z_0 = b_0 - b_0 = 0,
\end{equation}
for $m > 0, l \neq m$. Equation (35) implies that $\V z_0$ is perpendicular to the simplex formed by $\tilde{\V \tau}_1, \ldots, \tilde{\V \tau}_d$, which is exactly the facet opposite to $\tilde{\V \tau}_0$ and is the intersection of $T_A$ and $T_B$.
\end{proof}
\begin{theorem}[Cayley–Menger determinant \cite{blumenthal1943distribution}]
 The ($d$-$1$) dimensional volume $\mathrm{Vol}$ of the simplex formed by $\{\tilde{\V \tau}_1, \ldots,  \tilde{\V \tau}_d\}$ is given by  
 
\begin{equation}
   \mathrm{{Vol}}^2  = \gamma \begin{vmatrix}
    0 & \tilde{d}_{1,2} &   \cdots &  \tilde{d}_{1,d} & 1 \\
     \tilde{d}_{2,1} & 0 & \cdots & \tilde{d}_{2,d} & 1 \\
    \vdots & \vdots &  \ddots & \vdots & \vdots \\
     \tilde{d}_{d,1} &  \tilde{d}_{d,2} &   \cdots & 0 & 1 \\
    1 & 1 & \cdots & 1 & 0
    \end{vmatrix},
\end{equation}
where $\tilde{d}_{k,l} = \norm{\tilde{\V \tau}_k - \tilde{\V \tau}_l}^2$ and
\begin{equation}
   \gamma =  \frac{(-1)^{d}}{((d-1)!)^2 2^{d-1}}.
\end{equation}
\end{theorem}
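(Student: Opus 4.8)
The plan is to reduce the Cayley–Menger determinant to the Gram determinant of the edge vectors of the simplex, for which the volume formula is classical. First I would fix the apex $\tilde{\V \tau}_1$ and introduce the $(d-1)$ edge vectors $\V e_i = \tilde{\V \tau}_{i+1} - \tilde{\V \tau}_1$ for $i = 1,\ldots,d-1$. Collecting them as the columns of $\M E \in \R^{d\times(d-1)}$, the simplex spanned by $\{\tilde{\V \tau}_1,\ldots,\tilde{\V \tau}_d\}$ is $(d-1)$-dimensional and the standard Gram identity yields
\[
\mathrm{Vol}^2 = \frac{1}{((d-1)!)^2}\det(\M E^{T}\M E) = \frac{1}{((d-1)!)^2}\det \M G ,
\]
where $\M G = \M E^{T}\M E$ is the $(d-1)\times(d-1)$ Gram matrix with entries $G_{ij} = \V e_i^{T}\V e_j$. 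It then remains to prove that $\det \M G = \tfrac{(-1)^{d}}{2^{\,d-1}}\det \M M$, where $\M M$ denotes the $(d+1)\times(d+1)$ Cayley–Menger matrix whose determinant appears in (37); combining this with the display above immediately gives $\mathrm{Vol}^2 = \gamma\,\det \M M$ with $\gamma$ as stated.

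Next I would convert squared distances into Gram inner products via the polarization identity $2\,\V e_i^{T}\V e_j = \norm{\V e_i}^2 + \norm{\V e_j}^2 - \norm{\V e_i - \V e_j}^2$, noting that $\norm{\V e_i}^2 = \tilde{d}_{1,i+1}$ and $\norm{\V e_i - \V e_j}^2 = \tilde{d}_{i+1,j+1}$. Concretely, I would subtract the first distance row from rows $2,\ldots,d$ of $\M M$ and the first distance column from columns $2,\ldots,d$; since $\tilde{d}_{1,1}=0$, the resulting $(k,l)$ entry for $k,l\ge 2$ becomes
\[
\tilde{d}_{k,l} - \tilde{d}_{1,l} - \tilde{d}_{k,1} = -2\,\V e_{k-1}^{T}\V e_{l-1} = -2\,G_{k-1,l-1}.
\]
These elementary operations leave the determinant unchanged, turn the central $(d-1)\times(d-1)$ block into exactly $-2\,\M G$, and — crucially — collapse the bordering row of ones to a single nonzero pivot.

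With the matrix in this reduced form, I would finish by a double Laplace expansion: first along the border row of ones (now a single pivot), then along the border column of ones, which strips off the two borders and leaves $\det(-2\,\M G) = (-2)^{d-1}\det \M G$. Collecting the two cofactor signs $(-1)^{d}$ and $(-1)^{d+1}$ produced by these expansions gives
\[
\det \M M = (-1)^{d}\,2^{\,d-1}\det \M G ,
\]
whence $\det \M G = (-1)^{d}\,2^{-(d-1)}\det \M M$, and substitution into the Gram volume formula produces precisely $\mathrm{Vol}^2 = \tfrac{(-1)^{d}}{((d-1)!)^2\,2^{\,d-1}}\det \M M = \gamma\,\det \M M$.

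The Gram volume identity and the polarization step are routine; the delicate part is the sign and power-of-two bookkeeping in the two Laplace expansions, and I expect the main obstacle to be tracking the cofactor signs together with the factor $(-2)^{d-1}$ so that the constant matches $\gamma$ in every dimension. A clean way to control this is to perform the reduction on the relabeled matrix in which the border row becomes $(1,0,\ldots,0)$, so that each expansion has a unique nonzero pivot and the signs can be read off unambiguously. Since this is a classical identity one could alternatively just invoke Blumenthal's result, but the self-contained derivation above keeps this theorem consistent in style with the explicit matrix computations carried out in the earlier theorems.
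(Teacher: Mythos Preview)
Your proposal is correct, but there is nothing to compare it to: the paper does not give a proof of Theorem~3 at all. It states the Cayley--Menger formula as a classical result and simply attributes it to Blumenthal via the citation in the theorem header, then moves on to use it. So relative to the paper you are supplying strictly more --- an actual argument rather than a reference.

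For what it is worth, the argument you sketch is the standard one and the bookkeeping checks out. After subtracting the first distance row from rows $2,\ldots,d$ and the first distance column from columns $2,\ldots,d$, row $d{+}1$ becomes $(1,0,\ldots,0)$ and column $d{+}1$ becomes $(1,0,\ldots,0)^{T}$; the central $(d{-}1)\times(d{-}1)$ block is exactly $-2\,\M G$. Expanding along row $d{+}1$ contributes the sign $(-1)^{(d+1)+1}=(-1)^{d}$, and then expanding the resulting $d\times d$ minor along its last column contributes $(-1)^{1+d}$, leaving $\det(-2\,\M G)=(-2)^{d-1}\det\M G$. The product $(-1)^{d}(-1)^{d+1}(-2)^{d-1}=(-1)^{d}2^{\,d-1}$ gives $\det\M M=(-1)^{d}2^{\,d-1}\det\M G$, hence $\det\M G=(-1)^{d}2^{-(d-1)}\det\M M$ and $\mathrm{Vol}^{2}=\gamma\det\M M$ with the stated $\gamma$. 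Your one phrasing slip is calling the second expansion ``along the border column of ones'': after the row operations that column is already $(1,0,\ldots,0)^{T}$, not a column of ones, but this is cosmetic and does not affect the computation.
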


The intersection of $T_A$ and $T_B$ is the simplex formed by $\{\tilde{\V \tau}_1, \ldots, \tilde{\V \tau}_d\}$. Hence, by using Theorem 3, we can obtain the intersection volume $\mathrm{{Vol}}_{A,B}$. \\

We define the matrix $\M R_{A,B}$ as
\begin{equation}
    \M R_{A,B} = \mathrm{{Vol}}_{A,B} {\V u}_{A,B}^T {\M G}_{A,B}.
\end{equation}
From Theorems 1-3, we have that
\begin{equation}
    \abs{\V u_{A,B}^T (\V a_A - \V a_B)} {\rm Vol}_{d-1}(L_{A,B}) = \abs{\M R_{A, B} \V c}.
\end{equation}
From (13), we then calculate the $\HTV$ of $f_{\mathrm{Hull}}$ as
\begin{equation}
    \HTV(f_{\mathrm{Hull}}) = \sum_{(A, B) \in \mathcal{N_\mathcal{T}}} \abs{\M R_{A,B} \V c} = \norm{\M L \V c}_1, 
\end{equation}
where $\mathcal{N_\mathcal{T}}$ is the set of all unique neighbor simplices in $\mathcal{T}$. In (40), $\M L$ is a sparse matrix that is referred to as the regularization operator. It is of size $(\abs{\mathcal{N}_\mathcal{T}}, N_g)$ and each of its rows corresponds to the $\HTV$ term associated with two neighbor simplices. Hence, there are at most $d+2$ nonzero elements at each row of $\M{L}$. 
\subsection{Learning Problem}
To learn the regressor $\hat{f}$ for some given data $\{\V x_m, y_m\}_{m=1}^{M}$, we propose to solve the optimization problem
\begin{equation}
  \hat{f} \in \argmin\limits_{f \in \mathcal{X_{\mathrm{CPWL}}}} \sum_{m=1}^M \Op{E}(f(\V x_m), y_m) + \lambda \HTV(f),
\label{htv_learning}
\end{equation}
where $\mathcal{X_{\mathrm{CPWL}}}$ represents the CPWL function search space. If we restrict the search space to CPWL functions $f_{\mathrm{Hull}}$ parameterized by a triangulation over a set of grid points $\M T_g$, then we have that
\begin{equation}
   \hat{f}_{\mathrm{Hull}} \in \argmin\limits_{f_{\mathrm{Hull}} \in \mathcal{X}_{\M T_g}}  \sum_{m=1}^M \Op{E}(f_{\mathrm{Hull}}(\V x_m), y_m) + \lambda \HTV(f_{\mathrm{Hull}}).
\label{htv_learmning_2}
\end{equation}
Using the forward and regularization operators introduced in Section III.A and III.B and with $\V y = (y_m)_{m=1}^{M}$, we rewrite (42) as
\begin{equation}
    \hat{\V c} \in \argmin\limits_{\V c \in \R^{N_g}} \frac{1}{2} \norm{\V y - \M H \V c}_2^2 + \lambda \norm{\M L \V c}_1.
\end{equation}
Formulation (43) recasts the problem of finding $\hat{f}$ into a discrete problem of finding grid values $\hat{\V c}$. It is generically referred to as the generalized LASSO in the literature. The problem is convex and has solutions that can be found using methods such as the alternating-direction method of multipliers (ADMM)  \cite{tibshirani2011solution}, \cite{boyd2011distributed}. In the special case when the forward operator is the identity $\M H = \M I$, the dual of the problem is
\begin{equation}
    \hat{\V u} \in \argmin\limits_{\V u \in \R^{\abs{\mathcal{N}_\mathcal{T}}}} \frac{1}{2} \norm{\V y - \M L^T \V u}_2^2 \text{ subject to } \norm{\V u}_{\infty} \leq \lambda,
\end{equation}
where the relation $\hat{\V c} = (\V y - \M L^T \hat{\V u})$ holds. Although the dual problem (44) is high-dimensional, it is proximable. This means that we can use optimization algorithms such as fast iterative shrinkage-thresholding algorithm (FISTA) to solve it whereas they are not applicable for the primal problem \cite{beck2009fast}. FISTA has better convergence rate than ADMM and helps us to accelerate our computations. Algorithm 1 describes the iterations of FISTA when solving (44). The initialization $\V u_0 = \V 0$ is equivalent to the interpolation state ($\lambda = 0$) as it corresponds to $\hat{\V c} = \V y$. The value of $\alpha$ is less than the inverse of the largest eigenvalue of $2 \M{L}^T \M{L}$ and the function $\textbf{Clip}(\V a, \lambda)$ is defined as
\begin{equation}
    (\textbf{Clip}(\V a, \lambda))_k = \begin{cases}
    -\lambda, & a_k < -\lambda \\
    a_k, &, \abs{a_k} \leq \lambda \\
    \lambda, & {a_k} > \lambda.
\end{cases}
\end{equation}

\begin{table*}[h]
\begin{center}

    \renewcommand{\arraystretch}{1.3}
 \caption{Average training and testing mean-square error, normalized $\HTV$, number of parameters, and sparsity metric for several learning approaches used applied to the power-plant data set.}
\label{table:4dresults}
\begin{tabular}{lrrrrrrl} 

\toprule
 & Train MSE & Test MSE & Number of parameters & HTV & Sparsity  \\ 
\midrule
LR                                                                                                 &    $20.85 $        & $ 20.71 $   & $5 $ &   $0.0$   &  $100 \% $       \\ 
 DHTV                                                                            &  $1.04 $              & $\mathbf{13.14} $        & $5725$   &                 $0.75$   &     $65 \%$     \\ 
NN2                                                                                                               &  $ 11.17$            &      $14.25 $          &      $253501$      &            $1.92 $ &    $58 \%$       \\ 
 NN6                                                                             &         $10.98$         &       $14.24
$           &     $ 202201 $        &        $2.29$  &     $51 \%$      \\ 
NN36                                                                                                              &        $3.06$         &     $15.50 $            &     $ 1408201$  & $2.31 $  &     $37 \%$     \\ 
 RBF                                                                             &  $12.16$          &       $15.85$          &     6697                         &   $21.47$  &      $ 42 \%$\\
\bottomrule
\end{tabular}
 \renewcommand{\arraystretch}{1}

\end{center}

\end{table*}
\begin{algorithm}
\caption{FISTA iterations to solve (43)}\label{alg:cap}
\begin{algorithmic}
\State $\textbf{Initialize}: {\V u_0 = \V 0, \V v_0 = \V 0, t_k = 1} $\;
\For{$k = 0 \rightarrow N_{\mathrm{iter}}$}
\State $\V u_{k+1} \leftarrow \textbf{Clip}(\V v_k - \alpha (-2 \M L \V y + 2 \M L \M L^T \V v_k), \lambda)$\;
\State $t_{k+1} = \frac{1 + \sqrt{4 t_k^2 + 1}}{2}$\;
\State $\V v_{k+1} = \V u_{k+1} + \frac{(t_k - t_{k+1})}{t_{k+1}}  (\V u_{k+1} - \V u_k)$
\EndFor
\end{algorithmic}
\end{algorithm}

\subsection{Final Regressor}
The solution of (43) gives us the grid values $\hat{\V c}$ that define uniquely the CPWL function $\hat{f}_{\mathrm{Hull}}$. However, the domain of the definition of this function is restricted to the convex hull of the grid points. The most intuitive way to extend $\hat{f}_{\mathrm{Hull}}$ is to use the notion of nearest neighbors and assign the value of the closest grid points to the points outside the convex hull. However, the mapping generated by using nearest neighbors is piecewise-constant outside the convex hull, which does not generate a global CPWL relation. To overcome this issue, we define our final CPWL function $\hat{f}_{\mathrm{CPWL}}$ over $\R^d$ as
\begin{equation}
    \hat{f}_{\mathrm{CPWL}} (\V x) = \begin{cases}
        \hat{f}_{\mathrm{Hull}}(\V x), & \V x \in \mathrm{Hull}(\M X_g) \\
       \hat{f}_{\mathrm{Hull}} (\Phi (\V x)), & \text{otherwise}, 
    \end{cases}
\end{equation}
where $\Phi$ is the orthogonal projection of the point $\V x$ onto the convex hull of the grid points. The projection on the convex hull can be formulated as a quadratic optimization problem \cite{gabidullina2018problem}. It can be shown that the solution of this problem corresponds to CPWL functions, so that our final regressor is guaranteed to be globally CPWL \cite{spjotvold2007continuous}.

\section{Experiments}
In this section, we choose the grid points as the data points. In case of duplicate data points, we keep only one of them and let its target value to be the average of the target values of the duplicates. This results in the forward operator $\M H$ being identity and enables us to solve the learning problem with \textbf{Algorithm 1}. To have comparable ranges in each dimension of the input space, we standardize each feature. For the Delaunay triangulation, we use Scipy, which can safely handle data up to dimension $d=9$ \cite{2020SciPy-NMeth}. Our codes are available on the GitHub repository\footnote{https://github.com/mehrsapo/DHTV}. 

\subsection{Matterhorn}
The purpose of our first experiment is to illustrate the effect of regularization. To that end, we consider the elevation map $f: \R^2 \to \R$ of the iconic Swiss mountain Matterhorn. We sample randomly 4800 points from the domain of the function\footnote{ASTER Global Digital Elevation Model V003: https://lpdaac.usgs.gov/products/astgtmv003/}. Then, the mountain is reconstructed from the sampled data using our framework, which we refer to as Delaunay Hessian total variation (DHTV). The result of the DHTV learning is reported in Figure 4 for several values of the regularization hyperparameter $\lambda$. When $\lambda$ increases, we see that the number of affine pieces of the final mapping decreases. Interestingly, the regularizer tends to omit local fluctuations while preserving the main ridges. 
\begin{figure}[h]
\begin{center}
  \includegraphics[width=8cm,height=5cm,keepaspectratio]{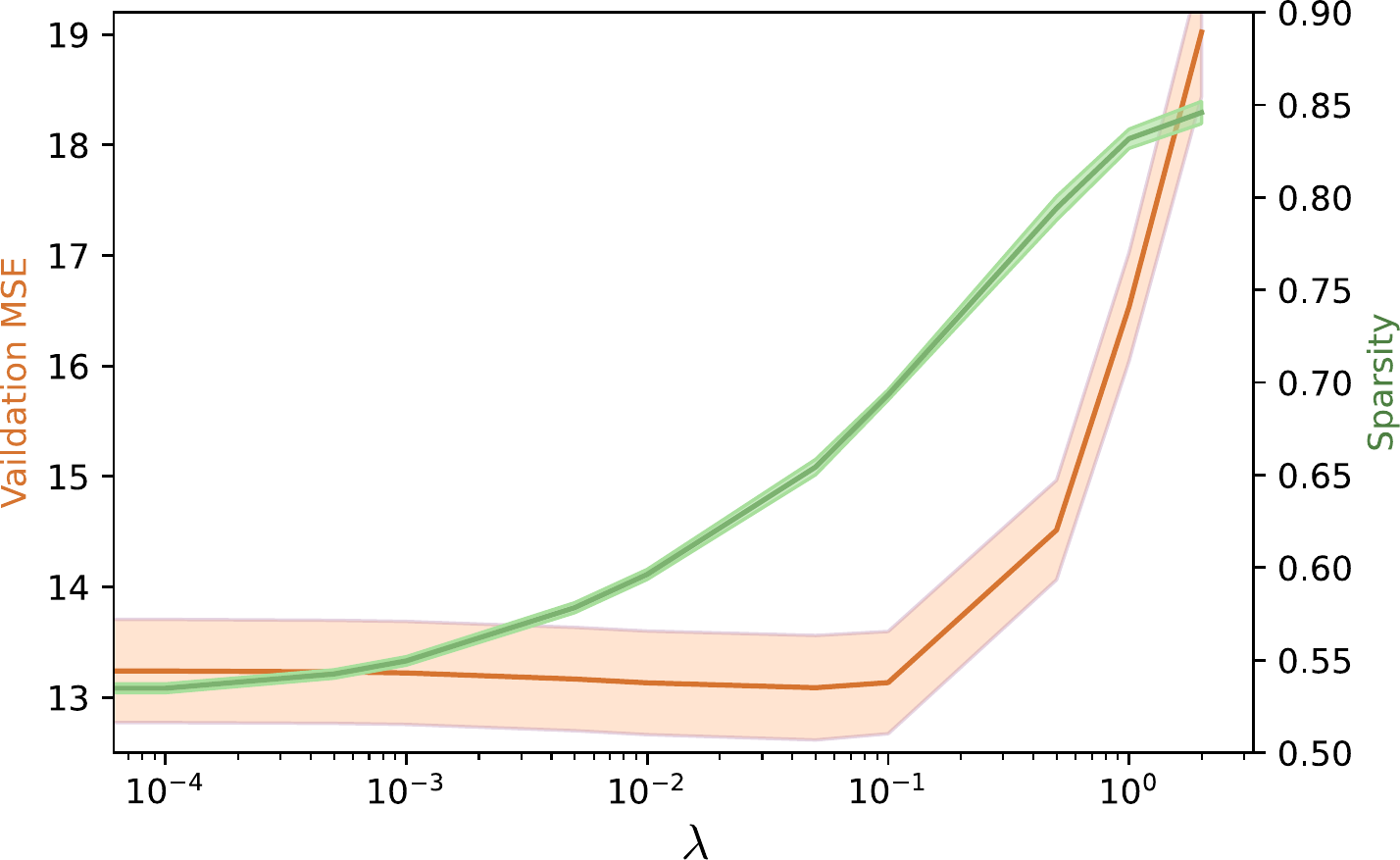}
  \caption{Effect of the regularization hyperparameter $\lambda$ on the validation mean-square error and the sparsity metric throughout the DHTV learning process of the 4-dimensional data. The highlighted area corresponds to the $95\%$ confidence interval.}
  \label{fig:delcpwl}
\end{center}
\end{figure}
\subsection{Combined-Cycle Power-Plant Data Set}
For the experiment of this section, we use a data set that forms a 4-dimensional mapping $f: \R^4 \rightarrow \R$. The data contain 9568 samples. They record the net hourly electrical-energy output of a power plant in terms of temperature, ambient pressure, relative humidity, and exhaust vacuum \cite{tufekci2014prediction}. We compare four learning schemes: linear regression (LR), our framework (DHTV), neural networks (NN), and a kernel method with Gaussian radial-basis functions (RBF). For all learning schemes, we perform 30 independent runs. At each run, we split the data randomly into train ($70\%$), validation ($15\%$), and test ($15\%$) points. In our framework, at each run we perform a grid search on the values of the regularization hyperparameter $\lambda$ and retain the best $\lambda$ in terms of validation error. For the neural network, we try three different fully connected networks: the first is a two-layer neural network (NN2) with 500 units in the hidden layer; the second is a six-layer network (NN6) with 200 units per hidden layer; and the third is a thirty-six-layer network (NN36) with 200 hidden units per layer, where the purpose is to demonstrate overfitting. To train the networks, we use the ADAM optimizer \cite{kingma2014adam}, a batch size of 32, and 10000 epochs. We set the learning rate to $0.001$ at the beginning and reduce it by a factor of 10 in the 400th, 600th, and 800th epoch. For the first two architectures (NN2 and NN6), we perform at each run a grid search on the weight-decay hyperparameter. For the last network (NN36), the weight decay is set to zero. We also perform a grid search on the hyperparameters of the RBF method and choose the model with the best validation error. To achieve a fair comparison of the $\HTV$ estimations of each model, we construct a random triangulation with 1000 grid points. The coordinates of these grid points follow a standard normal distribution in each dimension. We sample the final mapping of each model on the random grid points and calculate the $\HTV$ of the mapping using the sampled function values $\V c_R$, the regularization operator of the random triangulation $\M L_R$, and Formula (40). In addition, we define a metric of sparsity as $\frac{\norm{\abs{\M L_R \V c_R } \leq \epsilon}_0}{\text{Number of rows of }\M L_R} \times 100$. It corresponds to the percentage of almost-coplanar linear pieces over neighboring pairs of simplices. In practice, we set $\epsilon = 0.1$. We report in Table \ref{table:4dresults} the metrics averaged over 30 random splittings. The reported $\HTV$ is normalized by the mean $\HTV$ of the interpolation result or, equivalently, the generated mapping when $\lambda = 0$ in the DHTV framework. We observe that DHTV performs well in terms of prediction error in comparison with other methods. Our parameterization results in a mapping with low fitting error and good prediction power while having fewer learnable parameters than neural networks. An important feature of our method is the control of the complexity of the model by the value of $\lambda$. This is illustrated in Figure 5. The DHTV framework yields sparser mappings while maintaining or even increasing the generalization performance. In addition, as we can see in Table \ref{table:4dresults}, the $\HTV$ and sparsity metrics are in correspondence with the prediction error of the various models, confirming that $\HTV$ is a good metric for the model complexity.

\section{Conclusion}
We have proposed a novel regression method referred to as Delaunay Hessian total variation (DHTV). Our approach provides continuous and piecewise-linear (CPWL) mappings---the same class of functions generated by ReLU networks. We employ a parameterization based on the Delaunay triangulation of the input space. This parameterization represents any CPWL function by its samples on a grid. Unlike deep networks, it is straightforward to understand the effect of the model parameters on the mapping, which makes the proposed model directly interpretable. We have formulated the learning process as a convex minimization task. The Hessian total-variation ($\HTV$) regularization was used to control the complexity of the generated mapping. $\HTV$ has an intuitive formula for CPWL functions which involves a summation over individual affine pieces. We have showed that the $\HTV$ of the proposed model is the $\ell_1$-norm applied to a linear transformation of the grid-point values. This result has enabled us to recast the learning problem as the generalized least absolute shrinkage and selection-operator. By a clever choice of the grid points, we use the fast iterative shrinkage-thresholding algorithm to solve the optimization problem. Our experiments show that the $\HTV$ regularizing leads to simple models while preserving the generalization power. In future works, we plan to investigate the removal of unnecessary grid points based on their contribution to the $\HTV$, which could be used for mesh-simplification purposes.

\section{Acknowledgement}
The authors would like to thank Shayan Aziznejad and Joaquim Campos for having fruitful discussions. 

\bibliographystyle{IEEEtran}
\bibliography{IEEEabrv,dhtv.bib}

\begin{thebibliography}{10}
\providecommand{\url}[1]{#1}
\csname url@samestyle\endcsname
\providecommand{\newblock}{\relax}
\providecommand{\bibinfo}[2]{#2}
\providecommand{\BIBentrySTDinterwordspacing}{\spaceskip=0pt\relax}
\providecommand{\BIBentryALTinterwordstretchfactor}{4}
\providecommand{\BIBentryALTinterwordspacing}{\spaceskip=\fontdimen2\font plus
\BIBentryALTinterwordstretchfactor\fontdimen3\font minus
  \fontdimen4\font\relax}
\providecommand{\BIBforeignlanguage}[2]{{%
\expandafter\ifx\csname l@#1\endcsname\relax
\typeout{** WARNING: IEEEtran.bst: No hyphenation pattern has been}%
\typeout{** loaded for the language `#1'. Using the pattern for}%
\typeout{** the default language instead.}%
\else
\language=\csname l@#1\endcsname
\fi
#2}}
\providecommand{\BIBdecl}{\relax}
\BIBdecl

\bibitem{hastie2009overview}
T.~Hastie, R.~Tibshirani, and J.~Friedman, ``Overview of supervised learning,''
  in \emph{The Elements of Statistical Learning}.\hskip 1em plus 0.5em minus
  0.4em\relax Springer, 2009, pp. 9--41.

\bibitem{tian2022comprehensive}
Y.~Tian and Y.~Zhang, ``A comprehensive survey on regularization strategies in
  machine learning,'' \emph{Information Fusion}, vol.~80, pp. 146--166, 2022.

\bibitem{scholkopf2002learning}
B.~Schlkopf, A.~J. Smola, and F.~Bach, \emph{Learning with Kernels: Support
  Vector Machines, Regularization, Optimization, and Beyond}.\hskip 1em plus
  0.5em minus 0.4em\relax The MIT Press, 2018.

\bibitem{gross2003linear}
J.~Gross and J.~Gro{\ss}, \emph{Linear Regression}.\hskip 1em plus 0.5em minus
  0.4em\relax Springer Science \& Business Media, 2003, vol. 175.

\bibitem{kadri2010nonlinear}
H.~Kadri, E.~Duflos, P.~Preux, S.~Canu, and M.~Davy, ``Nonlinear functional
  regression: A functional \uppercase{RKHS} approach,'' in \emph{Proceedings of
  the Thirteenth International Conference on Artificial Intelligence and
  Statistics}, ser. Proceedings of Machine Learning Research, vol.~9, Chia
  Laguna Resort, Sardinia, Italy, 13--15 May 2010, pp. 374--380.

\bibitem{hofmann2005tutorial}
T.~Hofmann, B.~Schölkopf, and A.~J. Smola, ``{Kernel methods in machine
  learning},'' \emph{The Annals of Statistics}, vol.~36, no.~3, pp. 1171 --
  1220, 2008.

\bibitem{scholkopf2001generalized}
B.~Sch{\"o}lkopf, R.~Herbrich, and A.~J. Smola, ``A generalized representer
  theorem,'' in \emph{Computational Learning Theory}.\hskip 1em plus 0.5em
  minus 0.4em\relax Berlin, Heidelberg: Springer Berlin Heidelberg, 2001, pp.
  416--426.

\bibitem{schaback2007practical}
R.~Schaback, ``A practical guide to radial basis functions.''

\bibitem{miller1992review}
A.~S. Miller, B.~H. Blott, and T.~K. hames, ``Review of neural network
  applications in medical imaging and signal processing,'' \emph{Medical and
  Biological Engineering and Computing}, vol.~30, no.~5, pp. 449--464, 1992.

\bibitem{micchelli2006universal}
C.~A. Micchelli, Y.~Xu, and H.~Zhang, ``Universal kernels,'' \emph{Journal of
  Machine Learning Research}, vol.~7, no.~95, pp. 2651--2667, 2006.

\bibitem{abiodun2018state}
O.~I. Abiodun, A.~Jantan, A.~E. Omolara, K.~V. Dada, N.~A. Mohamed, and
  H.~Arshad, ``State-of-the-art in artificial neural network applications: A
  survey,'' \emph{Heliyon}, vol.~4, no.~11, p. e00938, 2018.

\bibitem{schmidt2020nonparametric}
J.~Schmidt-Hieber, ``Nonparametric regression using deep neural networks with
  \uppercase{R}e\uppercase{LU} activation function,'' \emph{The Annals of
  Statistics}, vol.~48, no.~4, pp. 1875--1897, 2020.

\bibitem{montufar2014number}
G.~Mont\'{u}far, R.~Pascanu, K.~Cho, and Y.~Bengio, ``On the number of linear
  regions of deep neural networks,'' in \emph{Proceedings of the 27th
  International Conference on Neural Information Processing Systems - Volume
  2}, ser. NIPS'14.\hskip 1em plus 0.5em minus 0.4em\relax Cambridge, MA, USA:
  MIT Press, 2014, p. 2924–2932.

\bibitem{arora2016understanding}
R.~Arora, A.~Basu, P.~Mianjy, and A.~Mukherjee, ``Understanding deep neural
  networks with rectified linear units,'' \emph{arXiv preprint
  arXiv:1611.01491}, 2016.

\bibitem{DBLP:conf/iclr/ParkYLS21}
S.~Park, C.~Yun, J.~Lee, and J.~Shin, ``Minimum width for universal
  approximation,'' in \emph{9th International Conference on Learning
  Representations, {ICLR} 2021, Virtual Event, Austria, May 3-7}, 2021.

\bibitem{fan2021interpretability}
F.-L. Fan, J.~Xiong, M.~Li, and G.~Wang, ``On interpretability of artificial
  neural networks: A survey,'' \emph{IEEE Transactions on Radiation and Plasma
  Medical Sciences}, vol.~5, no.~6, pp. 741--760, 2021.

\bibitem{parhi2021kinds}
R.~Parhi and R.~D. Nowak, ``What kinds of functions do deep neural networks
  learn? \uppercase{I}nsights from variational spline theory,'' \emph{SIAM
  Journal on Mathematics of Data Science}, vol.~4, no.~2, pp. 464--489, 2022.

\bibitem{unser2022ridges}
M.~Unser, ``Ridges, neural networks, and the \uppercase{R}adon transform,''
  \emph{arXiv preprint arXiv:2203.02543}, 2022.

\bibitem{ongie2019function}
G.~Ongie, R.~Willett, D.~Soudry, and N.~Srebro, ``A function space view of
  bounded norm infinite width \uppercase{R}e\uppercase{LU} nets: The
  multivariate case,'' \emph{arXiv preprint arXiv:1910.01635}, 2019.

\bibitem{debarre2019b}
T.~Debarre, J.~Fageot, H.~Gupta, and M.~Unser, ``B-spline-based exact
  discretization of continuous-domain inverse problems with generalized
  \uppercase{TV} regularization,'' \emph{IEEE Transactions on Information
  Theory}, vol.~65, no.~7, pp. 4457--4470, 2019.

\bibitem{debarre2021sparsest}
T.~Debarre, Q.~Denoyelle, M.~Unser, and J.~Fageot, ``Sparsest piecewise-linear
  regression of one-dimensional data,'' \emph{Journal of Computational and
  Applied Mathematics}, vol. 406, p. 114044, 2022.

\bibitem{unser2021unifying}
M.~Unser, ``A unifying representer theorem for inverse problems and machine
  learning,'' \emph{Foundations of Computational Mathematics}, vol.~21, no.~4,
  pp. 941--960, 2021.

\bibitem{aziznejad2021measuring}
S.~Aziznejad, J.~Campos, and M.~Unser, ``Measuring complexity of learning
  schemes using \uppercase{H}essian-\uppercase{S}chatten total-variation,''
  \emph{arXiv preprint arXiv:2112.06209}, 2021.

\bibitem{lefkimmiatis2011hessian}
S.~Lefkimmiatis, A.~Bourquard, and M.~Unser, ``Hessian-based norm
  regularization for image restoration with biomedical applications,''
  \emph{IEEE Transactions on Image Processing}, vol.~21, no.~3, pp. 983--995,
  2011.

\bibitem{campos2021learning}
J.~Campos, S.~Aziznejad, and M.~Unser, ``Learning of continuous and
  piecewise-linear functions with \uppercase{H}essian total-variation
  regularization,'' \emph{IEEE Open Journal of Signal Processing}, vol.~3, pp.
  36--48, 2022.

\bibitem{condat2006three}
L.~Condat and D.~Van De~Ville, ``Three-directional box-splines:
  Characterization and efficient evaluation,'' \emph{IEEE Signal Processing
  Letters}, vol.~13, no.~7, pp. 417--420, 2006.

\bibitem{goujon2022stable}
A.~Goujon, J.~Campos, and M.~Unser, ``Stable parametrization of continuous and
  piecewise-linear functions,'' \emph{arXiv preprint arXiv:2203.05261}, 2022.

\bibitem{rajan1994optimality}
V.~T. Rajan, ``Optimality of the \uppercase{D}elaunay triangulation in
  $\mathbb{R}^d$,'' \emph{Discrete \& Computational Geometry}, vol.~12, no.~2,
  pp. 189--202, 1994.

\bibitem{lee1980two}
D.-T. Lee and B.~J. Schachter, ``Two algorithms for constructing a
  \uppercase{D}elaunay triangulation,'' \emph{International Journal of Computer
  \& Information Sciences}, vol.~9, no.~3, pp. 219--242, 1980.

\bibitem{cignoni1998dewall}
P.~Cignoni, C.~Montani, and R.~Scopigno, ``Dewall: A fast divide and conquer
  \uppercase{D}elaunay triangulation algorithm in \uppercase{E}d,''
  \emph{Computer-Aided Design}, vol.~30, no.~5, pp. 333--341, 1998.

\bibitem{liu2020delaunay}
Y.~Liu and G.~Yin, ``The \uppercase{D}elaunay triangulation learner and its
  ensembles,'' \emph{Computational Statistics \& Data Analysis}, vol. 152, p.
  107030, 2020.

\bibitem{blumenthal1943distribution}
L.~Blumenthal and B.~Gillam, ``Distribution of points in $n$-space,'' \emph{The
  American Mathematical Monthly}, vol.~50, no.~3, pp. 181--185, 1943.

\bibitem{tibshirani2011solution}
R.~J. Tibshirani and J.~Taylor, ``The solution path of the generalized
  \uppercase{lasso},'' \emph{The Annals of Statistics}, vol.~39, no.~3, pp.
  1335--1371, 2011.

\bibitem{boyd2011distributed}
S.~Boyd, N.~Parikh, E.~Chu, B.~Peleato, and J.~Eckstein, ``Distributed
  optimization and statistical learning via the alternating direction method of
  multipliers,'' \emph{Found. Trends Mach. Learn.}, vol.~3, no.~1, p. 1–122,
  jan 2011.

\bibitem{beck2009fast}
A.~Beck and M.~Teboulle, ``A fast iterative shrinkage-thresholding algorithm
  for linear inverse problems,'' \emph{SIAM \uppercase{J}ournal on
  \uppercase{I}maging \uppercase{S}ciences}, vol.~2, no.~1, pp. 183--202, 2009.

\bibitem{gabidullina2018problem}
Z.~Gabidullina, ``The problem of projecting the origin of \uppercase{E}uclidean
  space onto the convex polyhedron,'' \emph{Lobachevskii Journal of
  Mathematics}, vol.~39, no.~1, pp. 35--45, 2018.

\bibitem{spjotvold2007continuous}
J.~Spj{\o}tvold, P.~T{\o}ndel, and T.~Johansen, ``Continuous selection and
  unique polyhedral representation of solutions to convex parametric quadratic
  programs,'' \emph{Journal of Optimization Theory and Applications}, vol. 134,
  no.~2, pp. 177--189, 2007.

\bibitem{2020SciPy-NMeth}
P.~Virtanen, R.~Gommers, T.~E. Oliphant, M.~Haberland, T.~Reddy, D.~Cournapeau,
  E.~Burovski, P.~Peterson, W.~Weckesser, J.~Bright, S.~J. {van der Walt},
  M.~Brett, J.~Wilson, K.~J. Millman, N.~Mayorov, A.~R.~J. Nelson, E.~Jones,
  R.~Kern, E.~Larson, C.~J. Carey, {\.I}.~Polat, Y.~Feng, E.~W. Moore,
  J.~{VanderPlas}, D.~Laxalde, J.~Perktold, R.~Cimrman, I.~Henriksen, E.~A.
  Quintero, C.~R. Harris, A.~M. Archibald, A.~H. Ribeiro, F.~Pedregosa, P.~{van
  Mulbregt}, and {SciPy 1.0 Contributors}, ``{{SciPy} 1.0: Fundamental
  algorithms for scientific computing in Python},'' \emph{Nature Methods},
  vol.~17, no.~3, pp. 261--272, 2020.

\bibitem{tufekci2014prediction}
P.~Tüfekci, ``Prediction of full load electrical power output of a base load
  operated combined cycle power plant using machine learning methods,''
  \emph{International Journal of Electrical Power \& Energy Systems}, vol.~60,
  pp. 126--140, 2014.

\bibitem{kingma2014adam}
D.~P. Kingma and J.~Ba, ``Adam: {A} method for stochastic optimization,'' in
  \emph{3rd International Conference on Learning Representations, {ICLR} 2015,
  San Diego, CA, USA, May 7-9, 2015, Conference Track Proceedings}, 2015.

\end{thebibliography}

\end{document}